\providecommand{\U}[1]{\protect\rule{.1in}{.1in}}
\newtheorem{theorem}{Theorem}
\newenvironment{proof}[1][Proof]{\noindent\textbf{#1.} }{\ \rule{0.5em}{0.5em}}
\begin{document}
\title{A combinatorial criterion for $k$-separability of multipartite Dicke states}
\author{Zhihua Chen$^{1}$}
\author{Zhihao Ma$^{2}$}
\author{Ting Gao$^{3}$}
\author{Simone Severini$^{4}$}
\affiliation{$^{1}$Department of Mathematics, College of Science, Zhejiang University of
Technology, Hangzhou 310024, China}
\affiliation{$^{2}$Department of Mathematics, Shanghai Jiaotong University, Shanghai
200240, China}
\affiliation{$^{3}$College of Mathematics and Information Science,Hebei Normal University,
Shijiazhuang 050024, China}
\affiliation{$^{4}$Department of Computer Science, University College London, Gower Street,
London WC1E 6BT, UK}

\begin{abstract}
We derive a combinatorial criterion for detecting $k$-separability of
$N$-partite Dicke states. The criterion is efficiently computable and
implementable without full state tomography.\textbf{ }We give examples in
which the criterion succeeds, where known criteria fail.

\end{abstract}
\maketitle

\section{Introduction}



While the structure of bipartite entangled states is fairly well-understood,
the study of multipartite entanglement still presents a number of partial
successes and difficult open problems (see the recent reviews
\cite{horodeckiqe,guehnewit}). General criteria for multipartite states of any
dimension were recently proposed in
\cite{HuberPRL,Guhne10,GaoPRA2010,GaoEPJD2011,HuberDicke,GaoPRA2012,GaoEPL2013,GaoPRL2014,Toth,
Huberqic}.

Here, we derive a criterion for detecting $k$-nonseparability in Dicke states
based on various ideas developed in
\cite{HuberPRL,Guhne10,GaoPRA2010,GaoEPJD2011,GaoPRA2012,GaoEPL2013,me}. The
criterion can be seen as a generalization of a method for detecting genuine
multipartite entanglement in Dicke states detailed in\textbf{ }%
\cite{HuberDicke}. The criterion have the advantages of being computationally
efficient and implementable without the need of state tomography. We give
examples in which the criterion is stronger than the ones proposed in
\cite{Huberqic}.

Let us recall some standard terminology and the definition of a Dicke state.
An $N$-partite pure state $|\psi\rangle\in\mathcal{H}_{1}\otimes
\mathcal{H}_{2}\otimes\cdots\otimes\mathcal{H}_{N}$ (dim $\mathcal{H}%
_{i}=d_{i}\geq2,1\leq i\leq N$) is said to be $k$\emph{-separable} if there is
a $k$-partition \cite{GaoPRA2012, GaoEPL2013}
\[%
\begin{tabular}
[c]{lll}%
$j_{1}^{1}\cdots j_{m_{1}}^{1}|j_{1}^{2}\cdots j_{m_{2}}^{2}|\cdots|j_{1}%
^{k}\cdots j_{m_{k}}^{k}$ & such that & $|\psi\rangle=|\psi_{1}\rangle
_{j_{1}^{1}\cdots j_{m_{1}}^{1}}|\psi_{2}\rangle_{j_{1}^{2}\cdots j_{m_{2}%
}^{2}}\cdots|\psi_{k}\rangle_{j_{1}^{k}\cdots j_{m_{k}}^{k}},$%
\end{tabular}
\
\]
where $|\psi_{i}\rangle_{j_{1}^{i}\cdots j_{m_{i}}^{i}}$ is the state of the
subsystems $j_{1}^{i},j_{2}^{i},...,j_{m_{i}}^{i}$, and $\bigcup_{i=1}%
^{k}\{j_{1}^{i},j_{2}^{i},\cdots,j_{m_{i}}^{i}\}=\{1,2,\cdots,N\}$. More
generally, an $N$-partite mixed state $\rho$ is said to be $k$%
\emph{-separable} if it can be written as a convex combination of
$k$-separable pure states $\rho=\sum\limits_{i}p_{i}|\psi_{i}\rangle
\langle\psi_{i}|$, where $|\psi_{i}\rangle$ is possibly $k$-separable under
different partitions. An $n$-partite state is said to be \emph{fully
separable} when it is $N$-separable and\emph{\ }$N$\emph{-partite entangled}
if it is not $2$-separable. A $k$-separable mixed state might not be separable
with regard to any specific $k$-partition, which makes $k$-separability
difficult to deal with. We shall consider pure states as a special case. 

The $N$\emph{-qubits Dicke state with }$m$\emph{\ excitations} (see
\cite{defdick}) is defined as
\[%
\begin{tabular}
[c]{lll}%
$|D_{m}^{N}\rangle=\dfrac{1}{\sqrt{C_{N}^{m}}}%
{\displaystyle\sum\limits_{1\leq i_{j}\neq i_{l}\leq N}}
|\phi_{i_{1},...,i_{m}}\rangle,$ & where & $|\phi_{i_{1},...,i_{m}}\rangle=%
{\displaystyle\bigotimes\limits_{i\not \in \{i_{1},...,i_{m}\}}}
|0\rangle_{i}%
{\displaystyle\bigotimes\limits_{i\in\{i_{1},...,i_{m}\}}}
|1\rangle_{i},$%
\end{tabular}
\
\]
where $C_{N}^{m}:=\binom{N}{m}$ is the binomial coefficient. For instance,
\[
|D_{2}^{4}\rangle=6^{-1/2}\left(  |1100\rangle+|1010\rangle+|1001\rangle
+|0110\rangle+|0011\rangle+|0101\rangle\right)  ,
\]
when $N=4$ and $m=2$.

Section II contains the statements and proofs of the results. Examples are in
Section III.

\section{Results}

We construct a set of inequalities which are optimally suited for testing
whether a given Dicke state is $N$-partite entangled:

\begin{theorem}
Suppose that $\rho$ is an $N$-partite density matrix acting on a Hilbert space
$\mathcal{H=H}_{1}\otimes\mathcal{H}_{2}\otimes\cdots\otimes\mathcal{H}_{N}$.
Let%
\[
\mathcal{F}(\rho,\phi):=A(\rho,\phi)-B(\rho,\phi),
\]
where%
\[
A(\rho,\phi):=\sum_{1\leq i\neq j\neq j^{\prime}\leq N}\left(  |\langle
\phi_{i,j}|\rho|\phi_{i,j^{\prime}}\rangle|-\sqrt{\langle\phi_{i,j}%
|\otimes\langle\phi_{i,j^{\prime}}|\Pi_{j}\rho^{\otimes2}\Pi_{j}|\phi
_{i,j}\rangle\otimes|\phi_{i,j^{\prime}}\rangle}\right)  ,
\]
and
\[
B(\rho,\phi):=N_{k}\sum_{1\leq i\neq j\leq N}\langle\phi_{i,j}|\rho|\phi
_{i,j}\rangle.
\]
Here, $|\phi_{i,j}\rangle:=|0...010...010...0\rangle\in\mathcal{H}$, with the
$1$s in the subspaces $\mathcal{H}_{i}$ and $\mathcal{H}_{j}$, $N_{k}%
:=\max\{2(N-k-1),N-k\}$ and $\Pi_{j}$, is the operator swapping the two copies
of $\mathcal{H}_{j}$ in $\mathcal{H}\otimes\mathcal{H}$, for $1\leq i\neq
j\leq N$. If the density matrix $\rho$ is $k$-separable then
\[
\mathcal{F}(\rho,\phi)\leq0.
\]

\end{theorem}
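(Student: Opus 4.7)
The plan is to reduce to pure $k$-separable states by convexity and then analyze them through a case split on how the triple $(i, j, j')$ is distributed across the given $k$-partition. For the convexity step, I would observe that $\langle \phi_{i,j}|\rho|\phi_{i,j}\rangle$ is linear in $\rho$, that $|\langle\phi_{i,j}|\rho|\phi_{i,j'}\rangle|$ is convex by the triangle inequality, and that $\Pi_j(|\phi_{i,j}\rangle \otimes |\phi_{i,j'}\rangle) = |\phi_i\rangle \otimes |\phi_{i,j,j'}\rangle$, so the radicand equals $\langle\phi_i|\rho|\phi_i\rangle \langle\phi_{i,j,j'}|\rho|\phi_{i,j,j'}\rangle$, a product of two linear functionals of $\rho$; its square root is concave via the scalar Cauchy--Schwarz inequality $(\sum_\ell p_\ell x_\ell)(\sum_\ell p_\ell y_\ell) \geq (\sum_\ell p_\ell \sqrt{x_\ell y_\ell})^2$. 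Summing, $\mathcal{F}$ is convex in $\rho$, so the inequality reduces to pure $k$-separable states.

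Next, for a pure product state $|\psi\rangle = \bigotimes_{\alpha=1}^k |\psi_\alpha\rangle$ with parts $M_1, \ldots, M_k$, I would factorize every amplitude $a_S := \langle\psi|\phi_S\rangle = \prod_\alpha a_\alpha^{S\cap M_\alpha}$, where $a_\alpha^T$ is the coefficient of the local basis state on $M_\alpha$ carrying excitations at $T$, and then classify each ordered triple $(i,j,j')$ by the parts $[i],[j],[j']$ that contain its entries. A direct expansion should show that whenever $[i]=[j]\neq[j']$, $[i]=[j']\neq[j]$, or $[i],[j],[j']$ are pairwise distinct, the two products $|a_{\{i,j\}} a_{\{i,j'\}}|$ and $|a_{\{i\}} a_{\{i,j,j'\}}|$ factorize identically across parts, so the summand of $A$ vanishes. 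Only the cases $[i]=[j]=[j']$ (all three in one block) and $[i]\neq[j]=[j']$ ($i$ isolated from the pair $j,j'$) yield a nontrivial contribution.

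In each of these two cases I would drop the non-positive $-|a_{\{i\}} a_{\{i,j,j'\}}|$ piece and bound the surviving product by $\tfrac{1}{2}(|a_{\{i,j\}}|^2 + |a_{\{i,j'\}}|^2)$ via AM--GM, expressing everything in terms of diagonal populations $\langle\phi_{i,j}|\rho|\phi_{i,j}\rangle$. Summing over all triples of each type and collecting, every diagonal $|a_{\{i,j\}}|^2$ acquires a multiplicity determined by $|M_{[i]}|$ and $|M_{[j]}|$; extremizing over all $k$-partitions of $\{1,\ldots,N\}$---the extremal ones placing most indices into a single block of size up to $N-k+1$, leaving the remaining $k-1$ parts as singletons---produces exactly the uniform coefficient $N_k = \max\{2(N-k-1), N-k\}$, matching the prescribed form of $B$ and yielding $A(\rho,\phi) \leq B(\rho,\phi)$. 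The main obstacle I anticipate is the combinatorial bookkeeping in this last step: the AAA and ABB triples feed into ordered pairs $(i,j)$ in subtly different ways, with a doubling coming from the AM--GM symmetry $(j,j')\leftrightarrow(j',j)$, and care is required to verify that the two quantities $2(N-k-1)$ and $N-k$ genuinely capture the worst-case multiplicities without overcounting.
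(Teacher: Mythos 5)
Your proposal is correct and follows essentially the same route as the paper's proof: the identity $\Pi_j(|\phi_{i,j}\rangle\otimes|\phi_{i,j'}\rangle)=|\phi_i\rangle\otimes|\phi_{i,j,j'}\rangle$ and the exact cancellation of the summand whenever $j$ and $j'$ lie in different parts (your three ``vanishing'' cases are precisely the paper's single case $[j]\neq[j']$), the AM--GM bound for same-part pairs, the worst-case count over partitions with one block of size $N-k+1$ giving $N_k=\max\{2(N-k-1),N-k\}$, and the Cauchy--Schwarz concavity argument for the extension to mixtures. The combinatorial bookkeeping you flag as the main risk is indeed the least explicit step in the paper as well, but your plan contains all the ingredients the authors use.
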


\begin{proof}
We start with a 4-qubit state to get an intuition. Note that for a four-qubit
pure state $\rho=|\psi\rangle\langle\psi|$, we have
\begin{align}
\mathcal{F}(\rho,\phi)  &  =2(|\rho_{4,6}|-\sqrt{\rho_{2,2}\rho_{8,8}}%
+|\rho_{4,7}|-\sqrt{\rho_{3,3}\rho_{8,8}}+|\rho_{4,10}|-\sqrt{\rho_{2,2}%
\rho_{12,12}}+|\rho_{4,11}|-\sqrt{\rho_{3,3}\rho_{12,12}}%
\nonumber\label{fpsir}\\
&  +|\rho_{6,7}|-\sqrt{\rho_{5,5}\rho_{8,8}}+|\rho_{6,10}|-\sqrt{\rho
_{2,2}\rho_{14,14}}+|\rho_{6,13}|-\sqrt{\rho_{5,5}\rho_{14,14}}+|\rho
_{7,11}|-\sqrt{\rho_{3,3}\rho_{15,15}}\nonumber\\
&  +|\rho_{7,13}|-\sqrt{\rho_{5,5}\rho_{15,15}}+|\rho_{10,11}|-\sqrt
{\rho_{9,9}\rho_{12,12}}+|\rho_{10,13}|-\sqrt{\rho_{9,9}\rho_{14,14}}\\
&  +|\rho_{11,13}|-\sqrt{\rho_{9,9}\rho_{15,15}})-N_{k}(\rho_{4,4}+\rho
_{6,6}+\rho_{7,7}+\rho_{10,10}+\rho_{11,11}+\rho_{13,13}).\nonumber
\end{align}
If a 4-qubit pure state $\rho=|\psi\rangle\langle\psi|$ is biseparable then
$\mathcal{F}(\rho,\phi)\leq0$ by the criterion in \cite{Huber11a}.

Suppose that a 4-qubit pure state $\rho=|\psi\rangle\langle\psi|$ with
$|\psi\rangle=\sum\limits_{i_{1}i_{2}i_{3}i_{4}}\psi_{i_{1}i_{2}i_{3}i_{4}%
}|i_{1}i_{2}i_{3}i_{4}\rangle$ $(i_{1},i_{2},i_{3},i_{4}=0,1)$ is
$k$-separable, where $k=3,4$. Then,
\begin{align*}
\mathcal{F}(|\psi\rangle,\phi)  &  =2(|\psi_{0011}\psi_{0101}|-|\psi
_{0001}\psi_{0111}|+|\psi_{0011}\psi_{0110}|-|\psi_{0010}\psi_{0111}%
|+|\psi_{0011}\psi_{1001}|-|\psi_{0001}\psi_{1011}|\\
&  +|\psi_{0011}\psi_{1010}|-|\psi_{0010}\psi_{1011}|+|\psi_{0101}\psi
_{0110}|-|\psi_{0100}\psi_{0111}|+|\psi_{0101}\psi_{1001}|-|\psi_{0001}%
\psi_{1101}|\\
&  +|\psi_{0101}\psi_{1100}|-|\psi_{0100}\psi_{1101}|+|\psi_{0110}\psi
_{1010}|-|\psi_{0010}\psi_{1110}|+|\psi_{0110}\psi_{1100}|-|\psi_{0100}%
\psi_{1110}|\\
&  +\psi_{1001}\psi_{1010}|-|\psi_{1011}\psi_{1000}|+|\psi_{1001}\psi
_{1100}|-|\psi_{1000}\psi_{1101}|+|\psi_{1010}\psi_{1100}|-|\psi_{1000}%
\psi_{1110}|)\\
&  -N_{k}(|\psi_{0011}|^{2}+|\psi_{0101}|^{2}+|\psi_{0110}|^{2}+|\psi
_{1001}|^{2}+|\psi_{1010}|^{2}+|\psi_{1100}|^{2}).
\end{align*}
Note that there are six 3-partitions $1|2|34$, $1|3|24$, $1|4|23$, $2|3|14$,
$2|4|13$, and $3|4|12$. WLOG we prove that $\mathcal{F}(|\psi\rangle,\phi
)\leq0$ holds for a 4-qubit pure state $\rho=|\psi\rangle\langle\psi|$ which
is 3-separable under the partition $1|2|34$. Suppose that
\begin{align*}
|\psi\rangle &  =(a_{1}|0\rangle+a_{2}|1\rangle)_{1}\otimes(b_{1}%
|0\rangle+b_{2}|1\rangle)_{2}\otimes(c_{1}|00\rangle+c_{2}|01\rangle
+c_{3}|10\rangle+c_{4}|11\rangle)_{34}\\
&  =a_{1}b_{1}c_{1}|0000\rangle+a_{1}b_{1}c_{2}|0001\rangle+a_{1}b_{1}%
c_{3}|0010\rangle+a_{1}b_{1}c_{4}|0011\rangle+a_{1}b_{2}c_{1}|0100\rangle
+a_{1}b_{2}c_{2}|0101\rangle\\
&  +a_{1}b_{2}c_{3}|0110\rangle+a_{1}b_{2}c_{4}|0111\rangle+a_{2}b_{1}%
c_{1}|1000\rangle+a_{2}b_{1}c_{2}|1001\rangle+a_{2}b_{1}c_{3}|1010\rangle
+a_{2}b_{1}c_{4}|1011\rangle\\
&  +a_{2}b_{2}c_{1}|1100\rangle+a_{2}b_{2}c_{2}|1101\rangle+a_{2}b_{2}%
c_{3}|1110\rangle+a_{2}b_{2}c_{4}|1111\rangle,
\end{align*}
then
\begin{align*}
A_{1}  &  =2(|\psi_{0011}\psi_{0101}|-|\psi_{0001}\psi_{0111}|+|\psi
_{0011}\psi_{0110}|-|\psi_{0010}\psi_{0111}|+|\psi_{0011}\psi_{1001}%
|-|\psi_{0001}\psi_{1011}|\\
&  +|\psi_{0011}\psi_{1010}|-|\psi_{0010}\psi_{1011}|+|\psi_{0101}\psi
_{1001}|-|\psi_{0001}\psi_{1101}|+|\psi_{0101}\psi_{1100}|-|\psi_{0100}%
\psi_{1101}|+|\psi_{0110}\psi_{1010}|\\
&  -|\psi_{0010}\psi_{1110}|+|\psi_{0110}\psi_{1100}|-|\psi_{0100}\psi
_{1110}|+|\psi_{1001}\psi_{1100}|-|\psi_{1000}\psi_{1101}|+|\psi_{1010}%
\psi_{1100}|-|\psi_{1000}\psi_{1110}|)\\
&  =0;
\end{align*}
and
\begin{align*}
A_{2}  &  =2(|\psi_{0101}\psi_{0110}|-|\psi_{0100}\psi_{0111}|+\psi_{1001}%
\psi_{1010}|-|\psi_{1011}\psi_{1000}|)\\
&  -(|\psi_{0011}|^{2}+|\psi_{0101}|^{2}+|\psi_{0110}|^{2}+|\psi_{1001}%
|^{2}+|\psi_{1010}|^{2}+|\psi_{1100}|^{2})\\
&  \leq0.
\end{align*}
It follows that $\mathcal{F}(|\psi\rangle,\phi)=A_{1}+A_{2}\leq0$. if
$|\psi\rangle$ is 3-separable.

If $|\psi\rangle$ is fully separable, then
\begin{align*}
|\psi\rangle &  =(a_{1}|0\rangle+a_{2}|1\rangle)_{1}\otimes(b_{1}%
|0\rangle+b_{2}|1\rangle)_{2}\otimes(c_{1}|0\rangle+c_{2}|1\rangle)_{3}%
\otimes(d_{1}|0\rangle+d_{2}|1\rangle)_{4}\\
&  =a_{1}b_{1}c_{1}d_{1}|0000\rangle+a_{1}b_{1}c_{1}d_{2}|0001\rangle
+a_{1}b_{1}c_{2}d_{1}|0010\rangle+a_{1}b_{1}c_{2}d_{2}|0011\rangle+a_{1}%
b_{2}c_{1}d_{1}|0100\rangle+a_{1}b_{2}c_{1}d_{2}|0101\rangle\\
&  +a_{1}b_{2}c_{2}d_{1}|0110\rangle+a_{2}b_{2}c_{2}d_{2}|0111\rangle
+a_{2}b_{1}c_{1}d_{1}|1000\rangle+a_{2}b_{2}c_{1}d_{2}|1001\rangle+a_{2}%
b_{1}c_{2}d_{1}|1010\rangle+a_{2}b_{1}c_{2}d_{2}|1011\rangle\\
&  +a_{2}b_{2}c_{1}d_{1}|1100\rangle+a_{2}b_{2}c_{1}d_{2}|1101\rangle
+a_{2}b_{2}c_{2}d_{1}|1110\rangle+a_{2}b_{2}c_{2}d_{2}|1111\rangle,
\end{align*}
and
\begin{align*}
\mathcal{F}(|\psi\rangle,\phi)  &  =2(|\psi_{0011}\psi_{0101}|-|\psi
_{0001}\psi_{0111}|+|\psi_{0011}\psi_{0110}|-|\psi_{0010}\psi_{0111}%
|+|\psi_{0011}\psi_{1001}|-|\psi_{0001}\psi_{1011}|\\
&  +|\psi_{0011}\psi_{1010}|-|\psi_{0010}\psi_{1011}|+|\psi_{0101}\psi
_{0110}|-|\psi_{0100}\psi_{0111}|+|\psi_{0101}\psi_{1001}|-|\psi_{0001}%
\psi_{1101}|\\
&  +|\psi_{0101}\psi_{1100}|-|\psi_{0100}\psi_{1101}|+|\psi_{0110}\psi
_{1010}|-|\psi_{0010}\psi_{1110}|+|\psi_{0110}\psi_{1100}|-|\psi_{0100}%
\psi_{1110}|\\
&  +|\psi_{1001}\psi_{1010}|-|\psi_{1000}\psi_{1011}|+|\psi_{1001}\psi
_{1100}|-|\psi_{1000}\psi_{1101}|+|\psi_{1010}\psi_{1100}|-|\psi_{1000}%
\psi_{1110}|)\\
&  =0.
\end{align*}
The equalities above confirms the statement in Eq.(\ref{fpsir}), when
restricted to 4-qubit pure states.

For the general case, we use the notation and proof method given in
\cite{GaoPRA2010, GaoEPL2013}.

Suppose that $\rho=|\psi\rangle\langle\psi|$ is a $k$-separable pure state
under the partition of $\{1,2,\cdots,N\}$ into $k$ pairwise disjoint subsets:
$\{1,2,\cdots,N\}=\bigcup_{l=1}^{k}A_{l}$, with $A_{l}=\{j_{1}^{l},j_{2}%
^{l},\cdots,j_{m_{l}}^{l}\}$ and%
\begin{align*}
|\psi\rangle &  =|\psi_{1}\rangle_{j_{1}^{1}\cdots j_{m_{1}}^{1}}\cdots
|\psi_{k}\rangle_{j_{1}^{k}\cdots j_{m_{k}}^{k}}\\
&  =\left(  \sum\limits_{i_{1}^{1},\cdots,i_{m_{1}}^{1}}a_{i_{1}^{1}\cdots
i_{m_{1}}^{1}}|i_{1}^{1}\cdots i_{m_{1}}^{1}\rangle\right)  _{j_{1}^{1}\cdots
j_{m_{1}}^{1}}\cdots\left(  \sum\limits_{i_{1}^{k},\cdots,i_{m_{k}}^{k}%
}a_{i_{1}^{k}\cdots i_{m_{k}}^{k}}|i_{1}^{k}\cdots i_{m_{k}}^{k}%
\rangle\right)  _{j_{1}^{k}\cdots j_{m_{k}}^{k}}\\
&  \sum\limits_{i_{1}^{1},\cdots,i_{m_{1}}^{1},\cdots,i_{1}^{k},\cdots
,i_{m_{k}}^{k}}a_{i_{1}^{1}\cdots i_{m_{1}}^{1}}\cdots a_{i_{1}^{k}\cdots
i_{m_{k}}^{k}}|i_{1}^{1}\cdots i_{m_{1}}^{1}\cdots i_{1}^{k}\cdots i_{m_{k}%
}^{k}\rangle_{j_{1}^{1}\cdots j_{m_{1}}^{1}\cdots j_{1}^{k}\cdots j_{m_{k}%
}^{k}}.
\end{align*}
Hence,
\[
\rho_{\sum\limits_{s,t}i_{t}^{s}d_{j_{t}^{s}+1}d_{j_{t}^{s}+2}\cdots
d_{N}d_{N+1}+1,\sum\limits_{s,t}\widetilde{i_{t}^{s}}d_{j_{t}^{s}+1}%
d_{j_{t}^{s}+2}\cdots d_{N}d_{N+1}+1}=a_{i_{1}^{1}\cdots i_{m_{1}}^{1}}\cdots
a_{i_{1}^{k}\cdots i_{m_{k}}^{k}}a_{\widetilde{i_{1}^{1}}\cdots
\widetilde{i_{m_{1}}^{1}}}^{\ast}\cdots a_{\widetilde{i_{1}^{k}}%
\cdots\widetilde{i_{m_{k}}^{k}}}^{\ast}.
\]
The sum is over all possible values of $\{i_{t}^{s}|s\in\{1,2,\cdots
,k\},t\in\{1,2,\cdots,m_{s}\}\}$, $d_{i}=2$, when $i\neq N+1$ and $d_{N+1}=1$.

We shall distinguish between the cases in which both indices $j$ and
$j^{\prime}$ correspond to different parts $A_{l}$ and $A_{l}^{\prime}$, or
the same parts $A_{l}$, $1\leq l\neq l^{\prime}\leq k$, with respect to
$|\psi\rangle$. By direct calculation, one has the following:
\begin{equation}
|\langle\phi_{i,j}|\rho|\phi_{i,j^{\prime}}\rangle|=\sqrt{\langle\phi
_{i,j}|\rho|\phi_{i,j}\rangle\langle\phi_{i,j^{\prime}}|\rho|\phi
_{i,j^{\prime}}\rangle}\leq\frac{\langle\phi_{i,j}|\rho|\phi_{i,j}%
\rangle+\langle\phi_{i,j^{\prime}}|\rho|\phi_{i,j^{\prime}}\rangle}{2},
\label{iandj}%
\end{equation}
when $j$ and $j^{\prime}$ are in the same part;
\begin{equation}
|\langle\phi_{i,j}|\rho|\phi_{i,j^{\prime}}\rangle|=\sqrt{\langle\phi_{i}%
|\rho|\phi_{i}\rangle\langle\phi_{i,j,j^{\prime}}|\rho|\phi_{i,j,j^{\prime}%
}\rangle}=\sqrt{\langle\phi_{i,j}|\otimes\langle\phi_{i,j}|\Pi_{j^{\prime}%
}^{+}\rho^{\otimes2}\Pi_{j}|\phi_{i,j}\rangle\otimes|\phi_{i,j^{\prime}%
}\rangle}, \label{iorj}%
\end{equation}
when $j$ and $j^{\prime}$ are in the different parts ($j\in A_{l},j\in
A_{l^{\prime}}$ with $l\neq l^{\prime}$). Here, $|\phi_{i}\rangle
=|00\cdots010\cdots0\rangle$, with $|1\rangle$ in the $i$-th subspace $H_{i}$,
and $|\phi_{i,j,j^{\prime}}\rangle=|0\cdots010\cdots010\cdots010\cdots
0\rangle$, such that all subspaces are in the state $|0\rangle$, except for
the subspaces $H_{i}$, $H_{j}$ and $H_{j}^{\prime}$, which are in the state
$|1\rangle$.

For a given $|\phi_{i,j}\rangle$, the number of $|\phi_{i,j^{\prime}}\rangle
$'s, with $j$ and $j^{\prime}$ in same part, is at most $\max\{2(N-k-1),N-k\}$%
. Notice that the maximal number of subsystems contained in a part of a
$k$-partition is $N-k+1$. Suppose that $A_{1}|A_{2}|\cdots|A_{k}$ is a
$k$-partition of $\{1,2,\cdots,n\}$, where $A_{l}=\{j_{1}^{l}\}$, for
$l=1,2,\cdots,k-1$, and $A_{k}=\{j_{1}^{k},j_{2}^{k},\cdots,j_{N-k+1}^{k}\}$.
When $i$, $j$ and $j^{\prime}$ are in the same part $A_{k}$, the number of
$|\phi_{i,j^{\prime}}\rangle$'s is $2(N-k-1)$. When $i$ belongs to $A_{1}\cup
A_{2}\cup\cdots\cup A_{k-1}$, while $j$ and $j^{\prime}$ belong to $A_{k}$,
the number of $|\phi_{i,j^{\prime}}\rangle$'s is $N-k$. Therefore, the number
of $|\phi_{i,j^{\prime}}\rangle$'s satisfying $j$ and $j^{\prime}$ in the same
part is at most $\max\{2(N-k-1),N-k\}$. This number is denoted as $N_{k}$.

By using the inequalities in (\ref{iandj}) and (\ref{iorj}), we have
\begin{align*}
\sum\limits_{1\leq i,j,j^{\prime}\leq N}|\langle\phi_{i,j}|\rho|\phi
_{i,j^{\prime}}\rangle| &  =\sum\limits_{i}\sum\limits_{\substack{j\in
A_{l},j^{\prime}\in A_{l^{\prime}},l\neq{l^{\prime}}\\l,l^{\prime}%
\in\{1,2,\cdots,k\}}}|\langle\phi_{i,j}|\rho|\phi_{i,j}\rangle|+\sum
\limits_{\substack{j,j^{\prime}\in A_{l},j\neq j^{\prime}\\l\in\{1,2,\cdots
,k\}}}|\langle\phi_{i,j}|\rho|\phi_{i,j}\rangle|\\
&  \leq\sum\limits_{i}\sum\limits_{\substack{j\in A_{l},j\in A_{l^{\prime}%
},l\neq{l^{\prime}}\\l,l^{\prime}\in\{1,2,\cdots,k\}}}\sqrt{\langle\phi
_{i,j}|\otimes\langle\phi_{i,j^{\prime}}|\Pi_{j}^{+}\rho^{\otimes2}\Pi
_{j}|\phi_{i,j}\rangle\otimes|\phi_{i,j^{\prime}}\rangle}\\
&  +\sum\limits_{i}\sum\limits_{\substack{j,j^{\prime}\in A_{l},j\neq
j^{\prime}\\l\in\{1,2,\cdots,k\}}}\left(  \frac{\langle\phi_{i,j}|\rho
|\phi_{i,j}\rangle+\langle\phi_{i,j^{\prime}}|\rho|\phi_{i,j^{\prime}}\rangle
}{2}\right)  \\
&  \leq\sum\limits_{i}\sum\limits_{j\neq j^{\prime}}\sqrt{\langle\phi
_{i,j}|\otimes\langle\phi_{i,j^{\prime}}|\Pi_{j}^{+}\rho^{\otimes2}\Pi
_{j}|\phi_{i,j}\rangle\otimes|\phi_{i,j^{\prime}}\rangle}+N_{k}\sum
\limits_{i,j}\langle\phi_{i,j}|\rho|\phi_{i,j}\rangle.
\end{align*}
Thus, the inequality in the statement of the theorem is satisfied by all
$k$-separable $N$-partite pure states.

It remains to show that the inequality holds if $\rho$ is a $k$-separable
$N$-partite mixed state. Indeed, the generalization of the inequality to mixed
states is a direct consequence of the convexity of the first summation in
$A(\rho,\phi)$, the concavity of $B(\rho,\phi)$, and the second summation in
$A(\rho,\phi)$, which we can see as follows.

Suppose that
\[
\rho=\sum\limits_{m}p_{m}\rho_{m}=\sum\limits_{m}p_{m}|\psi_{m}\rangle
\langle\psi_{m}|
\]
is a $k$-separable $N$-partite mixed state, where $\rho_{m}=|\psi_{m}%
\rangle\langle\psi_{m}|$ is $k$-separable. Then, by the Cauchy-Schwarz
inequality, $(\sum_{k=1}^{m}x_{k}y_{k})^{2}\leq(\sum_{k=1}^{m}x_{k}^{2}%
)(\sum_{k=1}^{m}y_{k}^{2})$, we get
\begin{align*}
\sum\limits_{i}\sum\limits_{j\neq j^{\prime}}|\langle\phi_{i,j}|\rho
|\phi_{i,j^{\prime}}\rangle|  &  \leq\sum\limits_{i}\sum\limits_{m}p_{m}%
\sum\limits_{j\neq j^{\prime}}|\langle\phi_{i,j}|\rho_{m}|\phi_{i,j}\rangle|\\
&  \leq\sum\limits_{m}p_{m}\left(  \sum\limits_{i}\sum\limits_{j\neq
j^{\prime}}\sqrt{\langle\phi_{i,j}|\otimes\langle\phi_{i,j}|\Pi_{j}^{+}%
\rho_{m}^{\otimes2}\Pi_{j}|\phi_{i,j^{\prime}}\rangle\otimes|\phi
_{i,j^{\prime}}\rangle}\right. \\
&  \left.  +N_{k}\sum\limits_{i,j}\sqrt{\langle\phi_{i,j}|\otimes\langle
\phi_{i,j}|\Pi_{j}^{+}\rho_{m}^{\otimes2}\Pi_{j}|\phi_{i,j}\rangle\otimes
|\phi_{i,j}\rangle}\right) \\
&  =\sum\limits_{i}\sum\limits_{j\neq j^{\prime}}\sum\limits_{m}\sqrt
{\langle\phi_{i}|p_{m}\rho_{m}|\phi_{i}\rangle}\sqrt{\langle\phi
_{i,j,j^{\prime}}|p_{m}\rho_{m}|\phi_{i,j,j^{\prime}}\rangle}+N_{k}%
\sum\limits_{i,j}\sum\limits_{m}p_{m}\langle\phi_{i,j}|\rho_{m}|\phi
_{i,j}\rangle\\
&  \leq\sum\limits_{i}\sum\limits_{j\neq j^{\prime}}\sqrt{\sum\limits_{m}%
\langle\phi_{i}|p_{m}\rho_{m}|\phi_{i}\rangle\sum\limits_{m}\langle
\phi_{i,j,j^{\prime}}|p_{m}\rho_{m}|\phi_{i,j,j^{\prime}}\rangle}+N_{k}%
\sum\limits_{i,j}\langle\phi_{i,j}|\rho|\phi_{i,j}\rangle\\
&  =\sum\limits_{i}\sum\limits_{j\neq j^{\prime}}\sqrt{\langle\phi
_{i,j}|\otimes\langle\phi_{i,j}|\Pi_{j}^{+}\rho^{\otimes2}\Pi_{j}%
|\phi_{i,j^{\prime}}\rangle\otimes|\phi_{i,j^{\prime}}\rangle}+N_{k}%
\sum\limits_{i,j}\sqrt{\langle\phi_{i,j}|\otimes\langle\phi_{i,j}|\Pi_{j}%
^{+}\rho^{\otimes2}\Pi_{j}|\phi_{i,j}\rangle\otimes|\phi_{i,j}\rangle},
\end{align*}
as desired. This completes the proof.
\end{proof}

\bigskip


We can choose $|\phi\rangle$ \emph{ad hoc }to get different inequalities for
detecting $k$-separability of different classes. For Theorem 2, we have chosen
$|\phi\rangle$ to be an $N$-qubit product states with $m$ excitations
(\emph{i.e.} $m$ entries of $|\phi\rangle$ are $|1\rangle$, while the
remaining $N-m$ entries are $|0\rangle$). The criterion performs well to
detect $k$-separability for $N$ qubit Dicke states with $m$ excitations mixed
with white noises.

\begin{theorem}
Suppose that $\rho$ is an $N$-partite density matrix acting on Hilbert space
$\mathcal{H=H}_{1}\otimes\mathcal{H}_{2}\otimes\cdots\otimes\mathcal{H}_{N}$,
and $|\phi_{i_{1}i_{2},\cdots,i_{m}}\rangle=|0\cdots010\cdots010\cdots
010\cdots0\rangle$ is a state of $\mathcal{H}$, where the local state in
$\mathcal{H}_{l}$ is $|0\rangle$, for $l\neq i_{1},i_{2},\cdots,i_{m}$, and
$|1\rangle$, for $l=i_{1},i_{2},\cdots,i_{m}$. Let
\[
\mathcal{F}(\rho,\phi):=A(\rho,\phi)-B(\rho,\phi),
\]
with
\begin{align*}
A(\rho,\phi) &  :=\sum\limits_{i_{1},\cdots,i_{j},\cdots,i_{m},i_{j}%
^{^{\prime}}}\left(  |\langle\phi_{i_{1}\cdots,i_{j},\cdots,i_{m}}|\rho
|\phi_{i_{1},\cdots,i_{j}^{^{\prime}}\cdots,i_{m}}\rangle|\right.  \\
&  \left.  -\sqrt{\langle\phi_{i_{1},\cdots,i_{j},\cdots,i_{m}}|\otimes
\langle\phi_{i_{1},\cdots,i_{j}^{^{\prime}},\cdots,i_{m}}|\Pi_{i_{j}}%
\rho^{\otimes2}\Pi_{i_{j}}|\phi_{i_{1},\cdots,i_{j},\cdots,i_{m}}%
\rangle\otimes|\phi_{i_{1},\cdots,i_{j}^{^{\prime}},\cdots,i_{m}}\rangle
}\right)  ,
\end{align*}
and
\[
B(\rho,\phi):=N_{k}\sum\limits_{i_{1},i_{2},\cdots,i_{m}}\langle\phi
_{i_{1},\cdots,i_{j},\cdots,i_{m}}|\rho|\phi_{i_{1},\cdots,i_{j},\cdots,i_{m}%
}\rangle.
\]
Here, $\Pi_{i_{j}}$ is the operator swapping the two copies of $\mathcal{H}%
_{i_{j}}$ in the twofold copy Hilbert space $\mathcal{H}^{\otimes
2}:=\mathcal{H}\otimes\mathcal{H}$, and
\[
N_{k}:=\max\{m(N-k+1-m),(m-1)(N-k-m+2),\cdots,(N-k)\}.
\]
If the density matrix $\rho$ is $k$-separable then
\[
\mathcal{F}(\rho,\phi)\leq0.
\]

\end{theorem}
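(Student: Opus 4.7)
The plan is to mirror the argument for Theorem 1 with $m=2$, generalizing each step to arbitrary $m$ excitations. First, I would reduce to $k$-separable pure states. The mixed-state passage uses only the Cauchy--Schwarz bound applied to the first summation of $A(\rho,\phi)$, together with the concavity of $\sqrt{\,\cdot\,}$ in $\rho^{\otimes 2}$ and the linearity of $B(\rho,\phi)$; this is exactly the convexity argument that closes the proof of Theorem 1, and it transcribes verbatim once the pure-state inequality is in hand. So the real content lies in the pure-state case.

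For a pure state $|\psi\rangle=\bigotimes_{l=1}^{k}|\psi_{l}\rangle_{A_{l}}$ with $\{1,\dots,N\}=\bigsqcup_{l=1}^{k}A_{l}$, I would process each term of $A(\rho,\phi)$ according to whether $i_{j}$ and $i_{j}^{\prime}$ sit in the same block of the partition. When $i_{j}\in A_{l}$ and $i_{j}^{\prime}\in A_{l'}$ with $l\neq l'$, the product structure of $|\psi\rangle$ lets me factor $|\langle\phi_{\ldots i_{j}\ldots}|\rho|\phi_{\ldots i_{j}^{\prime}\ldots}\rangle|$ into a product over the two affected factors, and this factorization is exactly the identity (\ref{iorj}) written in swap form; thus the subtracted square-root term cancels the modulus term and that contribution to $A(\rho,\phi)$ vanishes. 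When $i_{j},i_{j}^{\prime}\in A_{l}$ lie in the same block, the Cauchy--Schwarz/AM--GM step (\ref{iandj}) yields $|\langle\phi|\rho|\phi^{\prime}\rangle|\le\tfrac{1}{2}(\langle\phi|\rho|\phi\rangle+\langle\phi^{\prime}|\rho|\phi^{\prime}\rangle)$, and these diagonal contributions must be absorbed into $B(\rho,\phi)$.

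The main obstacle is establishing the correct combinatorial constant $N_{k}$ that controls this absorption. For a fixed excitation pattern $(i_{1},\dots,i_{m})$, the number of pairs $(j,i_{j}^{\prime})$ with $i_{j}^{\prime}$ in the same part as $i_{j}$ equals $\sum_{l} s_{l}(n_{l}-s_{l})$, where $n_{l}=|A_{l}|$ and $s_{l}$ is the number of excitations inside $A_{l}$, subject to $\sum_{l} n_{l}=N$, $n_{l}\ge1$, and $\sum_{l} s_{l}=m$. The optimization pushes toward concentrating mass in a single large block, which can have size at most $N-k+1$; putting $s$ excitations in that block and distributing the remaining $m-s\le k-1$ excitations into singleton blocks (which contribute $s_{l}(1-s_{l})=0$) leaves $s(N-k+1-s)$ as the count. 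Maximizing over admissible $s\in\{\max(1,m-k+1),\dots,m\}$ recovers exactly $N_{k}=\max\{m(N-k+1-m),(m-1)(N-k-m+2),\dots,(N-k)\}$. This is the step that requires the most bookkeeping, because one must verify that no other partition shape beats the ``one big part, rest singletons'' configuration.

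With $N_{k}$ justified, assembling the pieces is routine: the different-part terms cancel, the same-part terms are bounded by at most $N_{k}$ copies of each diagonal matrix element $\langle\phi_{i_{1},\dots,i_{m}}|\rho|\phi_{i_{1},\dots,i_{m}}\rangle$, and summing gives $A(\rho,\phi)\le B(\rho,\phi)$, i.e.\ $\mathcal{F}(\rho,\phi)\le 0$. Finally, the mixed-state extension proceeds by the same Cauchy--Schwarz chain used at the end of the proof of Theorem 1, interchanging the convex mixture with the square-root bound on $\rho_{m}^{\otimes 2}$ and collapsing to an inequality on $\rho^{\otimes 2}$.
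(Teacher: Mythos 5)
Your proposal is correct and follows exactly the strategy the paper uses for Theorem 1 (split the terms of $A(\rho,\phi)$ according to whether $i_{j}$ and $i_{j}^{\prime}$ lie in the same block, cancel the cross-block terms via the factorization identity, absorb the same-block terms into $B(\rho,\phi)$ via AM--GM, then extend to mixtures by Cauchy--Schwarz); the paper in fact states Theorem 2 without proof, so your write-up supplies the one genuinely new ingredient, namely the identification of the count $\sum_{l}s_{l}(n_{l}-s_{l})$ and its maximization over partitions, which correctly yields $N_{k}=\max_{1\le s\le m}s(N-k+1-s)$. The only point you flag but do not fully close -- that the ``one part of size $N-k+1$ plus singletons'' configuration is optimal -- is a routine concentration argument (moving an element into the block with the larger $s_{l}$, respectively larger $n_{l}-s_{l}$, never decreases the sum), and it reproduces the paper's $N_{k}=\max\{2(N-k-1),N-k\}$ when $m=2$.
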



In the following statement, we consider a criterion which is suitable for any
general quantum states. The states to be chosen are $|\chi\rangle$,
$|\chi_{\alpha}\rangle$ and $|\chi_{\beta}\rangle$.

\begin{theorem}
Let $V=\{|\chi_{1}\rangle,...,|\chi_{m}\rangle\}$ be a set of product states
in $\mathcal{H=H}_{1}\otimes\mathcal{H}_{2}\otimes\cdots\otimes\mathcal{H}%
_{N}$. If $\rho$ is $k$-separable then%
\begin{align}
\mathcal{T}(\rho,\chi) &  =\sum\limits_{|\chi_{\alpha}\rangle\in V}%
\sum\limits_{|\chi_{\beta}\rangle\in K_{\alpha}}\left(  |\langle\chi_{\alpha
}|\rho|\chi_{\beta}\rangle|-\sqrt{\langle\chi_{\alpha}|\otimes\langle
\chi_{\beta}|\Pi_{\alpha\beta}\rho^{\otimes2}\Pi_{\alpha\beta}|\chi_{\alpha
}\rangle\otimes|\chi_{\beta}\rangle}\right)  -N_{k}\sum\limits_{\alpha}%
\langle\chi_{\alpha}|\rho|\chi_{\alpha}\rangle\nonumber\\
&  \leq0,\label{eqth2}%
\end{align}
where%
\[
K_{\alpha}:=\{|\chi_{\beta}\rangle:||\chi_{\alpha}\rangle\cap|\chi_{\beta
}\rangle|=N-2\text{ with }|\chi_{\alpha}\rangle,|\chi_{\beta}\rangle\in V\},
\]
and $||\chi_{\alpha}\rangle\cap|\chi_{\beta}\rangle|$ is the number of
coordinates that are equal in both vectors (\emph{i.e}., $|\chi_{\alpha
}\rangle$ and $|\chi_{\beta}\rangle$ have only two different local states, say
the $i_{\alpha\beta}$-th and $i_{\alpha\beta}^{\prime}$-th local states),
while $\Pi_{\alpha\beta}$ is the operator swapping the two copies of
$\mathcal{H}_{i_{\alpha\beta}}$ in $\mathcal{H}^{\otimes2}$. Additionally,
\[
N_{k}:=\max\limits_{\alpha,i_{1},i_{2},\cdots,i_{N-k+1}}s_{\alpha,i_{1}%
,i_{2},\cdots,i_{N-k+1}},
\]
where $s_{\alpha,i_{1},i_{2},\cdots,i_{N-k+1}}$ is the number of states
$|\chi_{\beta}\rangle$ in $K_{\alpha}$ such that two of the states for the
$N-k+1$ particles $i_{1},i_{2},\cdots,i_{N-k+1}$ in $|\chi_{\beta}\rangle$ are
different from that of $|\chi_{\alpha}\rangle$, when $K_{\alpha}\neq\emptyset$.
\end{theorem}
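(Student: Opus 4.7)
The plan is to parallel the proof of Theorem~1, but in the more abstract setting where $V$ is an arbitrary set of product states. First, suppose $\rho=|\psi\rangle\langle\psi|$ is a $k$-separable pure state with $|\psi\rangle=\bigotimes_{l=1}^{k}|\psi_{l}\rangle_{A_{l}}$ under the partition $\{1,\ldots,N\}=\bigcup_{l=1}^{k}A_{l}$. For every pair with $|\chi_{\beta}\rangle\in K_{\alpha}$ the product states $|\chi_{\alpha}\rangle$ and $|\chi_{\beta}\rangle$ differ only at two sites $i_{\alpha\beta}$ and $i_{\alpha\beta}^{\prime}$, and I would distinguish two cases according to whether these two sites lie in the same part of the partition or in two different parts.

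In the different-parts case ($i_{\alpha\beta}\in A_{l}$, $i_{\alpha\beta}^{\prime}\in A_{l'}$ with $l\neq l'$), the swap $\Pi_{\alpha\beta}$ acts only on site $i_{\alpha\beta}\in A_{l}$, so $\Pi_{\alpha\beta}|\chi_{\alpha}\rangle\otimes|\chi_{\beta}\rangle=|\chi_{\alpha}^{\prime}\rangle\otimes|\chi_{\beta}^{\prime}\rangle$, where $|\chi_{\alpha}^{\prime}\rangle$ coincides with $|\chi_{\alpha}\rangle$ except that its $i_{\alpha\beta}$-th entry is replaced by that of $|\chi_{\beta}\rangle$, and symmetrically for $|\chi_{\beta}^{\prime}\rangle$. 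Expanding $\langle\chi_{\alpha}|\psi\rangle$, $\langle\chi_{\beta}|\psi\rangle$, $\langle\chi_{\alpha}^{\prime}|\psi\rangle$, and $\langle\chi_{\beta}^{\prime}|\psi\rangle$ as products of local overlaps $\langle\chi^{A_{l}}|\psi_{l}\rangle$ and regrouping the factors from $A_{l}$ and $A_{l'}$ gives
\[
\langle\chi_{\alpha}^{\prime}|\rho|\chi_{\alpha}^{\prime}\rangle\langle\chi_{\beta}^{\prime}|\rho|\chi_{\beta}^{\prime}\rangle=|\langle\chi_{\alpha}|\rho|\chi_{\beta}\rangle|^{2},
\]
so the integrand in $A(\rho,\chi)$ vanishes term by term in this case, exactly as in Eq.~(\ref{iorj}) of Theorem~1.

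In the same-part case ($i_{\alpha\beta},i_{\alpha\beta}^{\prime}\in A_{l}$), since $\rho$ is pure, Cauchy--Schwarz followed by AM--GM yields
\[
|\langle\chi_{\alpha}|\rho|\chi_{\beta}\rangle|\leq\sqrt{\langle\chi_{\alpha}|\rho|\chi_{\alpha}\rangle\langle\chi_{\beta}|\rho|\chi_{\beta}\rangle}\leq\tfrac{1}{2}\bigl(\langle\chi_{\alpha}|\rho|\chi_{\alpha}\rangle+\langle\chi_{\beta}|\rho|\chi_{\beta}\rangle\bigr),
\]
while the subtracted square root in $A(\rho,\chi)$ is non-negative and can only strengthen the bound. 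Summing over all same-part pairs and invoking that every part of a $k$-partition has at most $N-k+1$ sites, each $|\chi_{\alpha}\rangle$ participates in at most $N_{k}=\max_{\alpha,i_{1},\ldots,i_{N-k+1}}s_{\alpha,i_{1},\ldots,i_{N-k+1}}$ such pairs, so the combined same-part contribution is at most $N_{k}\sum_{\alpha}\langle\chi_{\alpha}|\rho|\chi_{\alpha}\rangle=B(\rho,\chi)$. This gives $\mathcal{T}(\rho,\chi)\leq 0$ for pure $k$-separable states.

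The extension to mixed states $\rho=\sum_{m}p_{m}|\psi_{m}\rangle\langle\psi_{m}|$ follows the template at the end of the proof of Theorem~1: one would apply convexity to bound $|\langle\chi_{\alpha}|\rho|\chi_{\beta}\rangle|\leq\sum_{m}p_{m}|\langle\chi_{\alpha}|\psi_{m}\rangle\langle\psi_{m}|\chi_{\beta}\rangle|$, then use Cauchy--Schwarz in the form $\sum_{m}p_{m}\sqrt{x_{m}y_{m}}\leq\sqrt{(\sum_{m}p_{m}x_{m})(\sum_{m}p_{m}y_{m})}$ to recombine pure-state overlaps into the twofold-copy expressions involving $\rho^{\otimes 2}$. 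The main obstacle is verifying that the combinatorial quantity $N_{k}$ really bounds the count of same-part pairs: one must confirm that every part of a $k$-partition of $\{1,\ldots,N\}$ has size at most $N-k+1$, and that taking the supremum over $\alpha$ and over all choices of $N-k+1$ sites does capture the worst case for the number of $|\chi_{\beta}\rangle\in K_{\alpha}$ whose two differing coordinates both lie inside a single such part. Once this counting step is secured, the preceding case analysis closes the argument.
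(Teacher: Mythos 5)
Your proposal is correct and follows essentially the same route as the paper's own proof: split the cross terms according to whether the two differing sites of $|\chi_{\alpha}\rangle$ and $|\chi_{\beta}\rangle$ fall in different parts (where the term vanishes by factorization of the pure product state, as in Eq.~(\ref{iorj})) or in the same part (where Cauchy--Schwarz and the $N_{k}$ counting bound give $A_{2}\leq B$), and then pass to mixed states by convexity plus Cauchy--Schwarz on the square roots. The counting step you flag as the remaining obstacle is secured exactly as you suggest --- every part of a $k$-partition has at most $N-k+1$ elements, so the supremum defining $N_{k}$ dominates the number of same-part partners of any $|\chi_{\alpha}\rangle$ --- which is precisely how the paper argues.
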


\begin{proof}
By using the same proof method as in \cite{GaoPRA2010, GaoEPL2013}, we prove
that (\ref{eqth2}) holds for any $k$-separable pure states $\rho=|\psi
\rangle\langle\psi|$. Let $\mathcal{T}(\rho,\chi)=A_{1}+A_{2}$, where $A_{1}$
is the sum of terms $|\langle\chi_{\alpha}|\rho|\chi_{\beta}\rangle
|-\sqrt{\langle\chi_{\alpha}|\otimes\langle\chi_{\beta}|\Pi_{\alpha\beta}%
\rho^{\otimes2}\Pi_{\alpha\beta}|\chi_{\alpha}\rangle\otimes|\chi_{\beta
}\rangle}$ in the first summation in (\ref{eqth2}). In this expression, the
two different bits of $|\chi_{\alpha}\rangle$ and $|\chi_{\beta}\rangle$ are
in two different parts of a $k$-partition, while $A_{2}$ is the sum containing
the summands in (\ref{eqth2}), such that the different bits of $|\chi_{\alpha
}\rangle$ and $|\chi_{\beta}\rangle$ are in the same part of a $k$-partition.

We first prove that $\mathcal{T}(\rho,\chi)\leq0$ for any $4$-partite pure
state. Let $V=\{|\chi_{1}\rangle,...,|\chi_{4}\rangle\}$ be a set of product
states in $\mathcal{H}$, where $|\chi_{1}\rangle=|0011\rangle$, $|\chi
_{2}\rangle=|0101\rangle$, $|\chi_{3}\rangle=|0110\rangle$, and $|\chi
_{4}\rangle=|1010\rangle$. Then $K_{1}=\{|\chi_{2}\rangle,|\chi_{3}%
\rangle,|\chi_{4}\rangle\}$, $K_{2}=\{|\chi_{1}\rangle,|\chi_{3}\rangle\}$,
$K_{3}=\{|\chi_{1}\rangle,|\chi_{2}\rangle,|\chi_{4}\rangle\}$, and
$K_{4}=\{|\chi_{1}\rangle,|\chi_{3}\rangle\}$. Thus,%
\begin{align*}
\mathcal{T}(\rho,\chi) &  =2\left(  \sum\limits_{i=2}^{4}\left(  |\langle
\chi_{1}|\rho|\chi_{i}\rangle|-\sqrt{\langle\chi_{1}|\otimes\langle\chi
_{i}|\Pi_{12}\rho^{\otimes2}\Pi_{12}|\chi_{1}\rangle\otimes|\chi_{i}\rangle
}\right)  \right.  \\
&  +|\langle\chi_{2}|\rho|\chi_{3}\rangle|-\sqrt{\langle\chi_{2}%
|\otimes\langle\chi_{3}|\Pi_{23}\rho^{\otimes2}\Pi_{23}|\chi_{2}\rangle
\otimes|\chi_{3}\rangle}\\
&  +\left.  |\langle\chi_{3}|\rho|\chi_{4}\rangle|-\sqrt{\langle\chi
_{3}|\otimes\langle\chi_{4}|\Pi_{34}\rho^{\otimes2}\Pi_{34}|\chi_{3}%
\rangle\otimes|\chi_{4}\rangle}\right)  -N_{k}\sum\limits_{i=1}^{4}\langle
\chi_{i}|\rho|\chi_{i}\rangle\\
&  =2(|\phi_{0011}\phi_{0101}|-|\phi_{0001}\phi_{0111}|+|\phi_{0011}%
\phi_{0110}|-|\phi_{0010}\phi_{0111}|+|\phi_{0011}\phi_{1010}|-|\phi
_{0010}\phi_{1011}|\\
&  +|\phi_{0110}\phi_{0101}|-|\phi_{0100}\phi_{0111}|+|\phi_{0110}\phi
_{1010}|-|\phi_{0010}\phi_{1110}|)\\
&  -N_{k}(|\phi_{0011}|^{2}+|\phi_{0101}|^{2}+|\phi_{0110}|^{2}+|\phi
_{1010}|^{2})\\
&  =A_{1}+A_{2};
\end{align*}

When $k=3$, there are six $3$-partitions, \emph{i.e.}, $1|2|34$, $1|3|24$,
$1|4|23$, $2|3|14$, $2|4|13$, and $3|4|12$. For $\chi_{1}$, we have
$s_{1,34}=0$, $s_{1,24}=1$, $s_{1,23}=1$, $s_{1,14}=1$, $s_{1,13}=0$, and
$s_{1,12}=0$; for $\chi_{2}$, we have $s_{2,34}=1$, $s_{2,24}=0$, $s_{2,23}%
=1$, $s_{2,14}=0$, $s_{2,13}=0$, and $s_{2,12}=0$; for $\chi_{3}$, we have
$s_{3,34}=1$, $s_{3,24}=1$, $s_{3,23}=0$, $s_{3,14}=0$, $s_{3,13}=0$, and
$s_{3,12}=1$; for $\chi_{4}$, we have $s_{4,34}=0$, $s_{4,24}=0$, $s_{4,23}%
=0$, $s_{4,14}=1$, $s_{4,13}=0$, and $s_{4,12}=1$. So, we get $N_{3}=1$.

For the case $1|2|34$:
\begin{align*}
A_{1}  &  =2\left[  \sum\limits_{i=2}^{4}\left(  |\langle\chi_{1}|\rho
|\chi_{i}\rangle|-\sqrt{\langle\chi_{1}|\otimes\langle\chi_{i}|\Pi_{12}%
\rho^{\otimes2}\Pi_{12}|\chi_{1}\rangle\otimes|\chi_{i}\rangle}\right)
+|\langle\chi_{3}|\rho|\chi_{4}\rangle|-\sqrt{\langle\chi_{3}|\otimes
\langle\chi_{4}|\Pi_{34}\rho^{\otimes2}\Pi_{34}|\chi_{3}\rangle\otimes
|\chi_{4}\rangle}\right] \\
&  =2(|\phi_{0011}\phi_{0101}|-|\phi_{0001}\phi_{0111}|+|\phi_{0011}%
\phi_{0110}|-|\phi_{0010}\phi_{0111}|+|\phi_{0011}\phi_{1010}|-|\phi
_{0010}\phi_{1011}|+|\phi_{0110}\phi_{1010}|-|\phi_{0010}\phi_{1110}|)\\
&  =0;
\end{align*}%
\begin{align*}
A_{2}  &  =2\left(  |\langle\chi_{2}|\rho|\chi_{3}\rangle|-\sqrt{\langle
\chi_{2}|\otimes\langle\chi_{3}|\Pi_{23}\rho^{\otimes2}\Pi_{23}|\chi
_{2}\rangle\otimes|\chi_{3}\rangle}\right)  -N_{3}(|\phi_{0011}|^{2}%
+|\phi_{0101}|^{2}+|\phi_{0110}|^{2}+|\phi_{1010}|^{2})\\
&  =2(|\phi_{0110}\phi_{0101}|-|\phi_{0100}\phi_{0111}|)-(|\phi_{0011}%
|^{2}+|\phi_{0101}|^{2}+|\phi_{0110}|^{2}+|\phi_{1010}|^{2})\\
&  \leq0.
\end{align*}
This implies that $\mathcal{T}(\rho,\chi)=A_{1}+A_{2}\leq0$. For the other
3-partitions, we can get the same result $\mathcal{T}(\rho,\chi)=A_{1}%
+A_{2}\leq0$, as $1|2|34$.

When $k=4$, there is a single $4$-partition, $1|2|3|4$. Then, it is not
possible for any two different bits to be in the same partition. It follows
that $N_{4}=0$ and $\mathcal{T}(\rho,\chi)=A_{1}=0$.

For a $k$-separable $4$-partite mixed state $\rho=\sum\limits_{m}p_{m}\rho
_{m}$, where $\rho_{m}=|\psi_{m}\rangle\langle\psi_{m}|$ is $k$-separable, we
have
\begin{align*}
\mathcal{T}(\rho,\chi)  &  =2\left[  \sum\limits_{i=2}^{4}\left(  |\langle
\chi_{1}|\sum\limits_{m}p_{m}\rho_{m}|\chi_{i}\rangle|-\sqrt{\langle\chi
_{1}|\otimes\langle\chi_{i}|\Pi_{12}(\sum\limits_{m}p_{m}\rho_{m})^{\otimes
2}\Pi_{12}|\chi_{1}\rangle\otimes|\chi_{i}\rangle}\right)  \right. \\
&  +|\langle\chi_{2}|\sum\limits_{m}p_{m}\rho_{m}|\chi_{3}\rangle
|-\sqrt{\langle\chi_{2}|\otimes\langle\chi_{3}|\Pi_{23}(\sum\limits_{m}%
p_{m}\rho_{m})^{\otimes2}\Pi_{23}|\chi_{2}\rangle\otimes|\chi_{3}\rangle}\\
&  +\left.  |\langle\chi_{3}|\sum\limits_{m}p_{m}\rho_{m}|\chi_{4}%
\rangle|-\sqrt{\langle\chi_{3}|\otimes\langle\chi_{4}|\Pi_{34}(\sum
\limits_{m}p_{m}\rho_{m})^{\otimes2}\Pi_{34}|\chi_{3}\rangle\otimes|\chi
_{4}\rangle}\right] \\
&  -N_{k}\sum\limits_{i=1}^{4}\langle\chi_{i}|\sum\limits_{m}p_{m}\rho
_{m}|\chi_{i}\rangle\\
&  \leq\sum\limits_{m}2p_{m}\left[  \sum\limits_{i=2}^{4}\left(  |\langle
\chi_{1}|\rho_{m}|\chi_{i}\rangle|-\sqrt{\langle\chi_{1}|\otimes\langle
\chi_{i}|\Pi_{12}(\rho_{m})^{\otimes2}\Pi_{12}|\chi_{1}\rangle\otimes|\chi
_{i}\rangle}\right)  \right. \\
&  +\left.  |\langle\chi_{2}|\rho_{m}|\chi_{3}\rangle|-\sqrt{\langle\chi
_{2}|\otimes\langle\chi_{3}|\Pi_{23}(\rho_{m})^{\otimes2}\Pi_{23}|\chi
_{2}\rangle\otimes|\chi_{3}\rangle}\right] \\
&  =\sum\limits_{m}p_{m}\mathcal{T}(\rho_{m},\chi)\\
&  \leq0
\end{align*}
which implies that the inequality (\ref{eqth2}) holds for $k$-separable
$4$-partite states.

Notice that for any $k$-separable pure states $\rho=|\psi\rangle\langle\psi|$,
if the two different bits of $|\chi_{\alpha}\rangle$ and $|\chi_{\beta}%
\rangle$ are in two different parts, then $|\langle\chi_{\alpha}|\rho
|\chi_{\beta}\rangle|-\sqrt{\langle\chi_{\alpha}|\otimes\langle\chi_{\beta
}|\Pi_{\alpha\beta}\rho^{\otimes2}\Pi_{\alpha\beta}|\chi_{\alpha}%
\rangle\otimes|\chi_{\beta}\rangle}=0$, otherwise $|\langle\chi_{\alpha}%
|\rho|\chi_{\beta}\rangle|-\frac{\langle\chi_{\alpha}|\rho|\chi_{\alpha
}\rangle+\langle\chi_{\beta}|\rho|\chi_{\beta}\rangle}{2}\leq0$. This implies
that inequality (\ref{eqth2}) holds for $k$-separable pure $N$-partite states
$\rho$.

Suppose that $\rho=\sum\limits_{m}p_{m}\rho_{m}$ is a $k$-separable mixed
$N$-partite state, where $\rho_{m}=|\psi_{m}\rangle\langle\psi_{m}|$ is
$k$-separable. It follow that
\begin{align*}
\mathcal{T}(\rho,\chi)  &  =\sum\limits_{|\chi_{\alpha}\rangle\in V}%
\sum\limits_{|\chi_{\beta}\rangle\in K_{\alpha}}\left(  |\langle\chi_{\alpha
}|\sum\limits_{m}p_{m}\rho_{m}|\chi_{\beta}\rangle|-\sqrt{\langle\chi_{\alpha
}|\otimes\langle\chi_{\beta}|\Pi_{\alpha\beta}(\sum\limits_{m}p_{m}\rho
_{m})^{\otimes2}\Pi_{\alpha\beta}|\chi_{1}\rangle\otimes|\chi_{i}\rangle
}\right) \\
&  -N_{k}\sum\limits_{\alpha}\langle\chi_{\alpha}|\sum\limits_{m}p_{m}\rho
_{m}|\chi_{\alpha}\rangle\\
&  \leq\sum\limits_{m}p_{m}\left[  \sum\limits_{|\chi_{\alpha}\rangle\in
V}\sum\limits_{|\chi_{\beta}\rangle\in K_{\alpha}}\left(  |\langle\chi
_{\alpha}|\rho_{m}|\chi_{\beta}\rangle|-\sqrt{\langle\chi_{\alpha}%
|\otimes\langle\chi_{\beta}|\Pi_{\alpha\beta}(\rho_{m})^{\otimes2}\Pi
_{\alpha\beta}|\chi_{1}\rangle\otimes|\chi_{i}\rangle}\right)  \right. \\
&  \left.  -N_{k}\sum\limits_{\alpha}\langle\chi_{\alpha}|\rho_{m}%
|\chi_{\alpha}\rangle\right] \\
&  =\sum\limits_{m}p_{m}\mathcal{T}(\rho_{m},\chi),\\
&  \leq0
\end{align*}
which completes the proof.
\end{proof}

\section{Examples}

Consider the family of $N$-qubit mixed states%
\[%
\begin{tabular}
[c]{lll}%
$\rho^{(D_{2}^{N})}=a|D_{2}^{N}\rangle\langle D_{2}^{N}|+\frac{(1-a)I_{N}%
}{2^{N}},$ & where & $|D_{2}^{N}\rangle=\frac{1}{\sqrt{C_{N}^{2}}}%
\sum\limits_{1\leq i\neq j\leq N}|\phi_{i,j}\rangle.$%
\end{tabular}
\
\]
By Theorem 1, if%
\[
a>\frac{2C_{N}^{2}(N-2)+N_{k}C_{N}^{2}}{2C_{N}^{2}(N-2)+N_{k}C_{N}^{2}%
-2^{N}N_{k}+2^{N+1}(N-2)}%
\]
then $\rho^{(D_{2}^{N})}$ is $k$-nonseparable. Thus, if
\[
a>\frac{C_{N}^{2}(2N-5)}{C_{N}^{2}(2N-5)+2^{N}}%
\]
then $\rho^{(D_{2}^{N})}$ are genuine entangled, which is exactly the same as
in \cite{Huber11a}; if $a>\frac{9}{17}$ then $\rho^{(D_{2}^{4})}$ is genuine
entangled; if $a>\frac{5}{13}$ then $\rho^{(D_{2}^{4})}$ is 3-nonseparable; if
$a>\frac{3}{11}$ then $\rho^{(D_{2}^{4})}$ is not fully-separable; if
$a>\frac{5}{21}=0.23$ then $\rho^{(D_{2}^{5})}$ is not fully-separable.
However, the inequality in \cite{Huberqic} detects that, if $a>\frac{21}{29}$
then $\rho^{(D_{2}^{4})}$ is genuine entangled; if $a>\frac{9}{13}$ then
$\rho^{(D_{2}^{4})}$ is 3-nonseparable; if $a>\frac{3}{11}$ then $\rho
^{(D_{2}^{4})}$ is not fully-separable; if $a>0.27$ then $\rho^{(D_{2}^{5})}$
is not a fully-separable 5-partite state.

Consider the $N$-qubit state%
\[%
\begin{tabular}
[c]{lll}%
$\rho^{(D_{m}^{N})}=\frac{(1-a)I_{N}}{2^{N}}+a|D_{m}^{N}\rangle\langle
D_{m}^{N}|,$ & where & $|D_{m}^{N}\rangle=\frac{1}{\sqrt{C_{N}^{m}}}%
\sum\limits_{1\leq i_{1}\leq i_{2}\leq\cdots\leq i_{m}\leq N}|\phi
_{i_{1},i_{2},\cdots,i_{m}}\rangle,$%
\end{tabular}
\
\]
By Theorem 2, if
\[
a>\frac{mC_{N}^{m}(N-m)+N_{k}C_{N}^{m}}{mC_{N}^{m}(N-m)+N_{k}C_{N}^{m}%
-2^{N}N_{k}+2^{N}m(N-m)},
\]
then $\rho^{(D_{m}^{N})}$ is $k$-nonseparable. For $N=5$ and $m=3$, we get
that if $a>\frac{5}{13}$ then $\rho^{(D_{m}^{N})}$ is 3-nonseparable, while
the method in \cite{Huberqic} fails.

Consider the one-parameter four-qubit state%
\[%
\begin{tabular}
[c]{lll}%
$\rho=\frac{1-a}{16}I_{16}+a|\phi\rangle\langle\phi|,$ & where & $|\phi
\rangle=\frac{1}{2}(|0011\rangle+|0101\rangle+|0110\rangle+|1010\rangle).$%
\end{tabular}
\]
By Theorem 3, if $a>\frac{7}{19}$ and $a>\frac{1}{5}$ then $\rho$ is
3-nonseparable and not fully-separable, respectively, while in \cite{Huberqic}%
, if $a>\frac{9}{13}$ and $a>\frac{3}{11}$, then $\rho$ is 3-nonseparable and
not fully-separable.

\bigskip

\noindent\textbf{Acknowledgments}. This work is supported by the National
Natural Science Foundation of China under Grant 11371005,11371247, 11201427
and 11571313. SS would like to thank the Institute of Natural Sciences
(INS)\ at Shanghai Jiao Tong University for the kind hospitality during
completion of this work. The support of INS is gratefully acknowledged.




\begin{thebibliography}{99}                                                                                               %
\bibitem {horodeckiqe}R. Horodecki, P. Horodecki, M. Horodecki and K.
Horodecki, Rev. Mod.Phys. \textbf{81}, 865 (2009).

\bibitem {guehnewit}O. G\"uhne and G. Toth, Phys. Reports \textbf{474}, 1(2009).

\bibitem {HuberPRL}M. Huber, F. Mintert, A. Gabriel, B. Hiesmayr, Phys. Rev.
Lett. \textbf{104}, 210501(2010).

\bibitem {Guhne10}O. G\"uhne and M. Seevinck, New J. Phys. \textbf{12}, 053002(2010).

\bibitem {GaoPRA2010}T. Gao and Y. Hong, Phys. Rev. A \textbf{8}2, 062113 (2010).

\bibitem {GaoEPJD2011}T. Gao and Y. Hong, Eur. Phys. J. D \textbf{61}, 765 (2011).

\bibitem {HuberDicke}M. Huber, P. Erker, H. Schimpf, A. Gabriel, B. C.
Hiesmayr, Phys. Rev. A 83, 040301(R) (2011); Phys. Rev. A 84, 039906(E) (2011)

\bibitem {GaoPRA2012}Y. Hong, T. Gao, and F. L. Yan, Phys. Rev.A \textbf{86},
062323 (2012) .

\bibitem {GaoEPL2013}T. Gao, Y. Hong, Y. Lu, and F. L. Yan, Europhys. Lett.
\textbf{104}, 20007 (2013) .

\bibitem {GaoPRL2014}T. Gao, F. L. Yan, and S. J. van Enk, Phys. Rev. Lett
\textbf{112}, 180501 (2014).

\bibitem {Toth}G. Toth, Phys. Rev. A 85, 022322 (2012).







\bibitem {me}Z. Chen,Z. Ma, J. Chen, S. Severini, Phys. Rev. A \textbf{85},
062320 (2012).





\bibitem {Huberqic}A. Gabriel, B. C. Hiesmayr, M. Huber, Quantum Inform.
Compu. 10,829(2010).

\bibitem {defdick}J. K. Stockton, J. M. Geremia, A. C. Doherty, and H.
Mabuchi, Phys. Rev. A 67,022112 (2003).

\bibitem {Huber11a}M. Huber, P. Erker, H. Schimpf, A. Gabriel, and B.
Hiesmayr, Phys. Rev. A(R) 83, 040301(2011).












\end{thebibliography}
\end{document}